\newcommand{\SWAP}{\operatorname{SWAP}}
\newtheorem{theorem}{Theorem}
\newtheorem{definition}{Definition}
\newtheorem{remark}{Remark}
\title{Symmetry-restricted quantum circuits are still well-behaved}
\author{Maximilian Balthasar Mansky
\institute{Institute of Informatics}
\institute{LMU Munich\\Munich, Germany}
\email{maximilian-balthasar.mansky@ifi.lmu.de}
\and
Santiago Londoño Castillo
\institute{Institute of Informatics}
\institute{LMU Munich\\Munich, Germany}
\and
Miguel Armayor-Martínez
\institute{Institute of Informatics}
\institute{LMU Munich\\Munich, Germany}
\and
Alejandro Bravo de la Serna
\institute{Institute of Informatics}
\institute{LMU Munich\\Munich, Germany}
\and
Gautham Sathish
\institute{Institute of Informatics}
\institute{LMU Munich\\Munich, Germany}
\and
Zhihao Wang
\institute{Institute of Informatics}
\institute{LMU Munich\\Munich, Germany}
\and
Sebastian Wölckert
\institute{Institute of Informatics}
\institute{LMU Munich\\Munich, Germany}
\and
Claudia Linnhoff-Popien
\institute{Institute of Informatics}
\institute{LMU Munich\\Munich, Germany}
}
\begin{document}
\maketitle

\begin{abstract}
We show that quantum circuits restricted by a symmetry inherit the properties of the whole special unitary group $SU(2^n)$, in particular composition, algebraic and topological closedness and connectedness. It extends prior work on symmetric states to the operators and shows that the operator space follows the same structure as the state space. The well-behavedness is independent of the symmetry requirement imposed on the subgroup. We provide an example of a permutation invariance across all qubits.

\end{abstract}

\section{Introduction}

Symmetries are a staple of modern physics. They are used to understand the underlying conservation laws, reduce the representation of a problem to a manageable size and generally represent physical structures \cite{sternberg_group_1995}. Beyond physics, they are an object of study in mathematics \cite{hall_lie_2013}. Thanks to their ubiquity, symmetries also become an object of study in the realm of quantum computing \cite{larocca_group-invariant_2022, meyer_exploiting_2023}. 

Restrictions on a larger space as a representation of a symmetry are generally interesting because they allow for a more intricate representation of the problem. Rather than exploring the whole solution space, understanding the underlying symmetry and integrating it into the solution space as a restriction reduces the available search space and should allow for faster convergence, especially in the case of parameterized applications \cite{schuld_introduction_2015}.

Symmetries are independent of the computation method. In the classical computing case, an appropriate choice for an algorithm allows for an efficient calculation. Graph isomorphisms are an example of symmetry and are exploited in graph neural networks \cite{bronstein_geometric_2021}. The general field of geometric deep learning captures symmetries and tries to express them through the architecture of neural networks.

In the context of machine learning, the most prominent example of symmetry exploitation is the convolutional neural network that integrates translational symmetry into image recognition \cite{wiatowski_mathematical_2018}. The effect is that the neural network can recognize an object in an image or data set independent of its position. This is in contrast with earlier approaches of dense neural networks that do not incorporate next neighbour relations and do not recognize inherent image symmetries. More recent transformer-based approaches \cite{vaswani_attention_2017} do not explicitly incorporate symmetries of the feature or solution space.

In the quantum computing domain, application-driven investigation has led to the implementation of symmetries, in general, \cite{larocca_group-invariant_2022, meyer_exploiting_2023} and specific \cite{song_geometry_2019, klus_symmetric_2021, mansky_permutation-invariant_2023}. The research generally takes inspiration from the classical domain with its much longer research history. Understanding the symmetry of the problem and incorporating it into the quantum circuits tailors the calculation to a smaller subspace. 


\section{Related work}

The structure Lie groups in general and the special unitary group $SU$ in particular can be understood via Lie group theory, for example through the standard works \cite{goodman_representations_2000, hall_lie_2013}. Most books only mention symmetry-restricted subspaces of $SU$ in passing and focus on complex of examples of Lie groups with special properties instead. Results specific to restricted (special) unitary groups are scattered across the literature and proven time and again for the specific use case of that work.

An excellent review article \cite{harrow_church_2013} collects results from the symmetry invariance of quantum states. Symmetric states, defined through a symmetric product of the state with all its permutations, have some interesting properties with regards to cloning and the de Finetti theorem \cite{konig_most_2009}. The view on the states as opposed on operators confers to the active vs passive view in quantum computing.

In contrast, symmetry-restricted quantum computing has been explored for its application to quantum machine learning \cite{larocca_group-invariant_2022, meyer_exploiting_2023}. Quantum circuits that respect the underlying symmetry of the problem are expected to converge faster to a solution. A side effect of the smaller solution space is that some parametrized quantum circuits with imposed symmetries do not exhibit barren plateaus \cite{ragone_unified_2023}. Barren plateaus is the observation that the local gradient for training loop vanishes, as a change in parameters does not result in a change in the loss function. The application-driven development is not situated on a solid theory, a gap that we intend to close with our work.

\section{Symmetry as a restriction}

We introduce the concept of symmetries on Lie groups, show its implementation for discrete groups via the special case of $\SWAP$ matrices and then generalize to any symmetry. In the subsections, we show that the desired properties of the special unitary group, in particular composition, closedness and connectedness, still hold for any imposed symmetry restriction.

A description of a physical system $U$ is invariant under a symmetry $S$ if it is left unchanged under the action of the symmetry,
\begin{equation}
    S U S^\dagger = U\label{eq:base-definition}
\end{equation}

This simple description extends to many different descriptions of physical systems. In the concrete case of quantum computing, we can represent $S$ and $U$ as $2^n \times 2^n$ matrices in $U(2^n)$ and $SU(2^n)$ respectively.\footnote{The two-qubit operations such as $\SWAP$ and $\operatorname{CNOT}$ are not part of $SU(2^n)$ due to negative eigenvalues. However, they are very useful for representing discrete symmetries.}

A discrete symmetry contains a countable number of operations that leave the underlying object invariant. The changes of the system under the symmetry operation are non-continuous. Discrete symmetries can be expressed through combinations of permutations – a property that we will exploit extensively here. This is easily seen in the effect on discrete groups itself, as shown in figure \ref{fig:simple-example}. The discrete symmetry operations on a square can be decomposed into permutation operations on the vertices of the square. However, not all single permutations are permitted. Consider the permutation $\pi_{1,2}$. it does not leave the square invariant and rather turns it into a bowtie.

\begin{figure}
    \centering
\begin{tikzpicture}[scale=1.7]
\draw (-.7, -.7) rectangle (.7,.7);
\draw (.7, -.7) node[fill=white] {\tiny 1} (.7, .7) node[fill=white] {\tiny 2} (-.7, .7) node[fill=white] {\tiny 3} (-.7, -.7) node[fill=white] {\tiny 4};
\draw[->] (-45:1.1) arc[radius=1.1, start angle=-45, end angle=45];
\draw[->] (-45:1.2) arc[radius=1.2, start angle=-45, end angle=135];
\draw[->] (-45:1.3) arc[radius=1.3, start angle=-45, end angle=225];

\draw[<->] (-.6, -.6) -- (-.6, .6);
\draw[<->] (.6, -.6) -- (.6, .6);
\draw[<->] (-.55, -.6) -- (.55, -.6);
\draw[<->] (-.55, .6) -- (.55, .6);

\draw[<->] (-.5, -.5) -- (.5, .5);
\draw[<->] (-.5, .5) -- (.5, -.5);

\begin{scope}[xshift=2cm, anchor=west]
\node at (0,.3) {Rotations $\pi_{1,2} \pi_{2,3} \pi_{3,4} \pi_{4,1}$};
\node at (0,0) {Edge mirrors $\pi_{1,2}\pi_{3,4}, \pi_{1,4} \pi_{2,3}$};
\node at (0,-.3) {Diagonals $\pi_{1,3}, \pi_{2,4}$};
\end{scope}
\end{tikzpicture}
    \caption{The square is invariant under rotations and mirroring across the diagonals and edges. The symmetry operations correspond to permutations of the indices of the vertices.}
    \label{fig:simple-example}
\end{figure}

Permutations have a natural representation in quantum computing through the $\SWAP$ operation. Its physical interpretation is the exchange of two information carrying objects, such as two qubits. Whether any physical movement happens at all depends on the underlying physical implementation. The matrix representation is given as:
\begin{equation}
    \SWAP=\begin{pmatrix}
        1 & 0 & 0 & 0\\
        0 & 0 & 1 & 0\\
        0 & 1 & 0 & 0\\
        0 & 0 & 0 & 1\\
    \end{pmatrix}
\end{equation}

For a discrete symmetry, the symmetry operation can then be composed of SWAPs between the relevant qubits. It is possible to identify the permutations $\pi$ of the discrete group symmetry with other two-qubit operations, such as CNOT. CNOT-invariant circuits behave the same way and inherit the same properties as shown here. We will explore the details of $\SWAP$-invariant circuits in detail in section \ref{sec:permutation-invariant}.

The $\SWAP$ matrix can itself be understood as a permutation matrix, in this case of the second and third entry. Any other permutation is also a valid symmetry, though it is more difficult to imagine a $\operatorname{CNOT}$ invariant circuit.

In fact, even continuous symmetries can be embedded into the symmetry $S$. Generally this requires an infinite set of matrices $S$ to represent a continuous symmetry, however the proofs below hold and the symmetry-respecting subspace inherits the properties of the superspace regardless. The exact realization of a discrete symmetry depends on the representation onto matrices and the representation of the problem onto the qubits.

\subsection{Composition}

\begin{theorem}[The symmetry restriction preserves the group properties]
For two elements $X_1, X_2$ in the symmetry-invariant space $\text{\emph{sym}}SU(2^n)\subset SU(2^n)$, the composition $X_2 \circ X_1$ is in $\text{\emph{sym}}SU$ as well. 
\end{theorem}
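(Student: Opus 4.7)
The plan is to unpack the definition of symmetry invariance from equation \eqref{eq:base-definition} and verify directly that the conjugation property is preserved under matrix multiplication. By assumption, for each symmetry operator $S$ we have $S X_1 S^\dagger = X_1$ and $S X_2 S^\dagger = X_2$, and I want to show the analogous identity for $X_2 X_1$.

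The computational core is a single line: insert the resolution of the identity $S^\dagger S = I$ between $X_2$ and $X_1$ inside the conjugate,
\begin{equation}
S (X_2 X_1) S^\dagger \;=\; S X_2 (S^\dagger S) X_1 S^\dagger \;=\; (S X_2 S^\dagger)(S X_1 S^\dagger) \;=\; X_2 X_1.
\end{equation}
This relies only on $S$ being unitary, which is guaranteed since the excerpt takes $S \in U(2^n)$. I would also note briefly that $X_2 X_1 \in SU(2^n)$ (since $SU(2^n)$ is itself a group under composition), so the product lies in $\text{sym}SU(2^n) = SU(2^n) \cap \{U : SUS^\dagger = U\}$.

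For completeness, if the symmetry is described by a set (or group) of operators $\{S_\alpha\}$ rather than a single $S$, the same calculation applies pointwise to each $S_\alpha$, so invariance under the full symmetry carries over. No continuity or topology is needed here; the statement is purely algebraic.

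There is no real obstacle to the argument. The only point that deserves a sentence of care is distinguishing the roles of $S$ and $U$ in equation \eqref{eq:base-definition}: $S$ need not be in $SU(2^n)$ (the excerpt explicitly remarks that $\SWAP$ and $\operatorname{CNOT}$ are only in $U(2^n)$), but unitarity alone is enough to cancel the inserted $S^\dagger S$ and complete the proof.
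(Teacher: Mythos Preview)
Your proof is correct and follows exactly the same approach as the paper: insert an identity $S^\dagger S$ (the paper writes it as $S\circ S^\dagger$, which is equally valid) between $X_2$ and $X_1$ and regroup to recover the individual invariance conditions. Your additional remarks about the product staying in $SU(2^n)$ and the argument applying pointwise to a family $\{S_\alpha\}$ are sensible clarifications that the paper leaves implicit.
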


\begin{proof}
    Since $X_1,X_2$ are in the symmetry-invariant space $U$ one has $S \circ X_1 \circ S^\dagger = X_1$ and  $S \circ X_2 \circ S^\dagger = X_2$. We then want to show that the composition under the group operation is closed, i.e. $S \circ \left( X_2 \circ X_1 \right) \circ S^\dagger = X_2 \circ X_1 $.
    \begin{equation}
        S \circ \left(X_2 \circ  X_1 \right) \circ S^\dagger= S \circ X_2 \circ \left( S \circ S^\dagger \right) \circ X_1 \circ S^\dagger
    \end{equation}
    \begin{equation}
        = \left( S \circ X_2 \circ S^\dagger\right) \circ \left( S \circ X_1 \circ S^\dagger\right)= X_2 \circ X_1
    \end{equation} 
Thus, the group operation preserves the invariance.
\end{proof}
From this proof, we also inherit the group properties, existence of inverse, identity and associativity from the supergroup $SU(2^n)$.

The above proof holds in general for any group operation $\circ$. As a matrix group, the regular matrix multiplication is a natural choice for group operation in $U(2^n)$. Hitherto, it will assumed that the group operation is the matrix multiplication, i.e. $X \circ Y= XY$.\\

Moreover, note that imposing an invariance on a Lie group imposes the same symmetry on its corresponding Lie algebra by the linearization around the identity.  Below, we show that the Lie algebra $\mathfrak{u}$ corresponding to a symmetry-invariant space $U$ is algebraically closed under its Lie bracket.

\subsection{Algebraic Closedness}

We provide two proofs for the closedness of the subspaces, for the algebraic properties and the topological properties.

\begin{theorem}[Algebraic closedness]
    The algebra of a symmetric-restricted subspace $\text{\emph{sym}}\mathfrak{su}(2^n)$ of $\mathfrak{su}(2^n)$ is closed.
\end{theorem}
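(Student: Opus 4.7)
The plan is to mimic the structure of the composition proof from the previous subsection, exploiting the fact that the defining symmetry condition $SXS^\dagger = X$ is itself linear in $X$ and that conjugation distributes over products. First I would recall that $\text{sym}\mathfrak{su}(2^n)$ is defined as the subspace of $\mathfrak{su}(2^n)$ whose elements $A$ satisfy $SAS^\dagger = A$ for every $S$ in the symmetry group; linearity of the conjugation map $A \mapsto SAS^\dagger$ immediately implies that $\text{sym}\mathfrak{su}(2^n)$ is closed under real linear combinations, so the only nontrivial point is closure under the Lie bracket $[A,B] = AB - BA$.

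Next I would take two arbitrary elements $A, B \in \text{sym}\mathfrak{su}(2^n)$ and compute $S[A,B]S^\dagger$ directly. Writing out the bracket and inserting $S^\dagger S = \mathbbm{1}$ between the two factors of each product, one gets
\begin{equation}
S[A,B]S^\dagger = SAS^\dagger \, SBS^\dagger - SBS^\dagger \, SAS^\dagger = [A,B],
\end{equation}
where the second equality uses the hypothesis that $A$ and $B$ are individually symmetry-invariant. This is the exact algebraic analogue of the composition argument in the previous theorem, and it shows that the bracket lies in $\text{sym}\mathfrak{su}(2^n)$.

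I do not expect any serious obstacle: the argument is a one-line calculation and mirrors the group-level proof almost verbatim, only replacing the group product by the Lie bracket. The only conceptual point worth flagging explicitly is that the invariance condition is compatible with both the linear structure (for subspace closure) and the multiplicative structure (so that the $S^\dagger S$ insertion trick works); once both are noted, closure under $[\cdot,\cdot]$ follows without further work. I would close the proof by remarking that, together with the antisymmetry and Jacobi identity inherited from $\mathfrak{su}(2^n)$, this makes $\text{sym}\mathfrak{su}(2^n)$ a genuine Lie subalgebra, consistent with the Lie-theoretic linearization comment made just before the theorem statement.
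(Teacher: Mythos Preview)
Your proposal is correct and matches the paper's proof essentially line for line: both insert $S^\dagger S=\mathbbm{1}$ between the factors of each product in the bracket and then invoke the individual invariance of the two elements to conclude $S[A,B]S^\dagger=[A,B]$. Your additional remarks on linear closure and on inheriting antisymmetry and the Jacobi identity are not in the paper but are harmless extras.
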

\begin{proof}
    Consider two elements $a, b \in \text{sym}\mathfrak{su}(2^n) $, such that $SaS^\dagger=a$ and $SbS^\dagger=b,\quad \forall S$. Then the Lie bracket
    \begin{align}
        [SaS^\dagger, SbS^\dagger] &= SaS^\dagger SbS^\dagger - SbS^\dagger SaS^\dagger \\
        &=SabS^\dagger - SbaS^\dagger\\
        &= S(ab - ba) S^\dagger\\
        &= S[a, b]S^\dagger
    \end{align}
    which shows that the commutator of two elements is still part of the symmetric subspace.
\end{proof}

\subsection{Topological Closedness}

\begin{theorem}
    A symmetry-restricted subspace of $SU(2^n)$, $\text{\emph{sym}}SU(2^n)$ is a closed manifold.
\end{theorem}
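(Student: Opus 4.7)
The plan is to exhibit $\text{sym}SU(2^n)$ as an intersection of preimages of a closed singleton under continuous maps, which gives topological closedness immediately, and then to upgrade this to a closed submanifold by appealing to Cartan's closed subgroup theorem.

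First I would fix the symmetry group generated by the $S$'s and, for each such $S$, define the map
\begin{equation}
\phi_S : SU(2^n) \to \mathbb{C}^{2^n \times 2^n}, \qquad \phi_S(U) = SUS^\dagger - U.
\end{equation}
Since matrix multiplication and subtraction are continuous, each $\phi_S$ is continuous. The defining equation (\ref{eq:base-definition}) says that $U$ lies in $\text{sym}SU(2^n)$ exactly when $\phi_S(U) = 0$ for every $S$, so
\begin{equation}
\text{sym}SU(2^n) = \bigcap_{S} \phi_S^{-1}(\{0\}).
\end{equation}
The singleton $\{0\}$ is closed, hence each $\phi_S^{-1}(\{0\})$ is closed in $SU(2^n)$, and an arbitrary intersection of closed sets is closed. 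Because $SU(2^n)$ is itself a closed (and compact) subset of $\mathbb{C}^{2^n\times 2^n}$, the symmetry-restricted subspace is closed in the ambient matrix space as well.

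To promote closed subset to closed manifold, I would note that $\text{sym}SU(2^n)$ is a subgroup of $SU(2^n)$: it contains the identity ($SIS^\dagger = I$), it is closed under products by Theorem 1, and it is closed under inverses because $SUS^\dagger = U$ implies $S U^{-1} S^\dagger = U^{-1}$ upon inverting both sides. Being a topologically closed subgroup of the Lie group $SU(2^n)$, Cartan's closed subgroup theorem endows $\text{sym}SU(2^n)$ with the structure of an embedded Lie subgroup, in particular a smooth submanifold; its Lie algebra coincides with the $\text{sym}\mathfrak{su}(2^n)$ of the preceding subsection, whose bracket-closure was already established.

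The topological part is essentially bookkeeping; the main conceptual step is invoking Cartan's theorem to obtain the smooth structure without having to build charts by hand. If one wanted to avoid that black box, the alternative would be to use the exponential map to produce local slice charts around the identity out of $\text{sym}\mathfrak{su}(2^n)$ and then translate them via group multiplication — longer, but self-contained given the algebraic closedness already proven.
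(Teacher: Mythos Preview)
Your argument for topological closedness is essentially the same as the paper's: both exhibit $\text{sym}SU(2^n)$ as an intersection of preimages of closed singletons under continuous maps (the paper uses $f_S(A)=SA-AS$ together with the unitarity and determinant constraints to cut out $SU(2^n)$ inside the ambient matrix space, while you work directly on $SU(2^n)$ with $\phi_S(U)=SUS^\dagger-U$; these are equivalent since $S$ is unitary). Where you genuinely go further is the second half: the paper's proof stops after establishing closedness as a subset and does not separately justify the word ``manifold'' in the statement, whereas you supply that step by verifying the subgroup axioms and invoking Cartan's closed subgroup theorem. That addition is not merely cosmetic---it is exactly what is needed to upgrade ``closed subset'' to ``embedded Lie subgroup, hence smooth submanifold''---and your alternative sketch via exponential charts from $\text{sym}\mathfrak{su}(2^n)$ would also work if one wanted to avoid citing Cartan.
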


\begin{proof}

    Recall the definition of $\text{sym}SU(2^n):=\{A\in SU(2^n)\colon SAS^\dagger = A\}$. We will show that this set forms a closed submanifold of $\mathbb{R}^{2^n\times 2^n}$.
    
    To this end, consider the usual determinant map 
 \begin{equation}
 	\det\colon \mathcal{M}_{2^n\times 2^n}(\mathbb{R})\to \mathbb{R}
\end{equation}
 and the two maps $f_1$ and $f_S$ for all $S$ defined as
 \begin{align}
 f_1\colon\mathcal{M}_{2^n\times 2^n}(\mathbb{R})&\to\mathcal{M}_{2^n\times 2^n}(\mathbb{R})\nonumber\\
A &\to AA^\dagger\\
 f_S\colon \mathcal{M}_{2^n\times 2^n}(\mathbb{R})&\to\mathcal{M}_{2^n\times 2^n}(\mathbb{R})\nonumber\\
A &\to SA - AS
 \end{align}


    Then, we may view $\text{sym}SU(2^n)$ as
    
    \begin{equation}
        \text{sym}SU(2^n) = \bigcap_{S}f_S^{-1}(\{\textbf{0}\}) \;\bigcap\; f_1^{-1}(\{\mathbbm{1}\}) \;\bigcap\; \text{det}^{-1}(\{1\})
    \end{equation}

    and since the pre-image under a continuous map of a closed set is closed and the  intersection of closed sets is closed, the claim follows.
\end{proof}
\begin{remark}
    The proof holds for both finite intersections of closed sets (in the case of discrete symmetry restrictions) and infinite intersections (for continuous symmetries).
\end{remark}

\subsection{Connectedness}

Connectedness allows the complete parametrization of the space by guaranteeing that every point on the symmetry-invaraint space is connected to every other point. If a space is not connected, it decomposes into two or more disjoint spaces. 

We show connectedness via the centralizer of a the supergroup, $C_{U(2^n)}(H)$. The centralizer defines all symmetry elements that are invariant under the subgroup, essentially the inverse of the subgroup definition.

\begin{definition}[Group Centralizer]
Let $G$ be a group and $S$ be a subgroup of $G$. The centralizer of $S$ in $G$ is defined as
$$C_G(S):=\{g\in G|gs=sg,\forall s\in S\}.$$
\end{definition}
\begin{theorem}[The centralizer is connected]
Let $H$ be a subgroup of $U(2^n)$. Then the centralizer $C_{U(2^n)}(H)$ is connected.
\end{theorem}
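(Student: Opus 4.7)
The plan is to establish path-connectedness of $C_{U(2^n)}(H)$ by constructing, for each element $U$ in the centralizer, an explicit continuous path from the identity to $U$ that remains inside the centralizer. The key ingredient is the spectral decomposition of unitary operators, which transfers the commutation relation $Uh=hU$ down to commutation of each spectral projection with $h$.

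First I would fix an arbitrary $U\in C_{U(2^n)}(H)$. Since $U$ is unitary, hence normal, the spectral theorem in finite dimensions gives $U=\sum_k e^{i\theta_k}P_k$, where the $P_k$ are mutually orthogonal projections onto the distinct eigenspaces of $U$, $\sum_k P_k=\mathbbm{1}$, and $\theta_k\in\mathbb{R}$. Next I would verify that every $P_k$ lies in the commutant of $H$: if $v$ belongs to the $e^{i\theta_k}$-eigenspace and $h\in H$, then $U(hv)=hUv=e^{i\theta_k}(hv)$, so $hv$ lies in the same eigenspace. Thus each eigenspace of $U$ is $H$-invariant, and since $h$ is unitary it preserves the orthogonal decomposition $\mathbb{C}^{2^n}=\bigoplus_k \mathrm{range}(P_k)$, which is equivalent to $hP_k=P_kh$ for every $k$.

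With this in hand, I would define the Hermitian operator $A=\sum_k\theta_k P_k$; it commutes with all of $H$ because each $P_k$ does. The curve
\begin{equation}
\gamma\colon [0,1]\to U(2^n),\qquad \gamma(t)=e^{itA}=\sum_k e^{it\theta_k}P_k
\end{equation}
is continuous, satisfies $\gamma(0)=\mathbbm{1}$ and $\gamma(1)=U$, and for every $t$ commutes with each $h\in H$ since $[A,h]=0$. Thus $\gamma$ is a path from $\mathbbm{1}$ to $U$ lying entirely in $C_{U(2^n)}(H)$, so the centralizer is path-connected and therefore connected.

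The main subtlety will be the commutation step: the argument only shows directly that each eigenspace is mapped into itself by $h$, and one must observe that unitarity of $h$ together with the orthogonal eigenspace decomposition upgrades this to full commutation $hP_k=P_kh$, rather than the weaker $hP_k=P_khP_k$. Once that is cleanly stated, the rest of the proof is a routine spectral-calculus argument and avoids invoking heavier machinery such as Schur's lemma or the isotypic decomposition (which would alternatively yield the stronger structural statement $C_{U(2^n)}(H)\cong\prod_i U(m_i)$ and connectedness as a corollary).
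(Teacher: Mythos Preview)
Your proof is correct and follows essentially the same strategy as the paper's: both arguments spectrally decompose an arbitrary centralizer element, observe that each eigenspace is $H$-invariant, and then interpolate the eigenvalue phases via $e^{it\theta_k}$ to build a path to the identity inside the centralizer. Your presentation in terms of spectral projections and the Hermitian logarithm $A=\sum_k\theta_kP_k$ is a slightly cleaner packaging of exactly the construction the paper carries out with explicit eigenvector matrices $P$ and the parametrized diagonal $D(t)$.
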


\begin{proof}
For any $A\in C_{U(2^n)}(H)$ with eigenvalues $e^{i\theta_1},e^{i\theta_2},\dots,e^{i\theta_m}$ and multiplicities $k_1,k_2,\dots,k_m$ respectively, $A$ can be diagonalized by

\begin{equation}
A=PDP^{\dagger},\label{eq:apdp}
\end{equation}
with $P$ given by
\begin{equation*}
P=(P_1^{(1)}, \ldots, P_{k_1}^{(1)},P_1^{(2)},\ldots, P_{k_2}^{(2)},\dots,P_1^{(m)}, \ldots P_{k_m}^{(m)})
\end{equation*}
where each $P^{(i)}_j$ for $i\in \{1,\ldots, m\}$ and $j\in\{1,\ldots, k_i\}   $  denotes a (column) eigenvector and 
\begin{equation*}
D=\mbox{diag}(\underbrace{e^{i\theta_1},\dots,e^{i\theta_1}}_{k_1},\underbrace{e^{i\theta_2},\dots,e^{i\theta_2}}_{k_2},\dots,\underbrace{e^{i\theta_m},\dots,e^{i\theta_m}}_{k_m}).
\end{equation*}
Then for integers $1\leq i\leq m$, the eigenspace corresponding to $e^{i\theta_i}$ is 
\begin{equation}
    V_i:=\operatorname{span}\{P_1^{(i)}, \ldots, \;P_{k_i}^{(i)}\}
\end{equation}
We also have the following relation:
\begin{equation}
    AP^{(i)}_j = e^{i\theta_i}P^{(i)}_j,\quad\mbox{if } i \in\{1,\ldots, m\}, j\in \{1,\ldots, k_i\}
\end{equation}
Since $A\in C_{U(2^n)}(H)$, for each $S\in H\subset U(2^n)$, it holds that $A=SAS^{\dagger}=SPDP^{\dagger}S^{\dagger}=SPD(SP)^\dagger$. This means that $A$ can also be diagonalized by $SP$ for fixed eigenvalues and their positions. Since eigenvectors correspond to the same eigenvalue are in a same eigenspace, we have for $\forall S$, $i \in\{1,\ldots, m\}$ and $j\in \{1,\ldots, k_i\}$

\begin{equation}
\label{eq:SP_is_eigenvector}
SP^{(i)}_j\in V_i\quad \text{i.e.}\quad A(SP^{(i)}_j)=SDS^\dagger (SP^{(i)}_j)=e^{i\theta_i}(SP^{(i)}_j)
\end{equation}



We now define a curve $t\colon [0,1]\mapsto D(t)$ where
\begin{equation}
D(t)=\operatorname{diag}(\underbrace{e^{it\theta_1},\dots,e^{it\theta_1}}_{k_1},\underbrace{e^{it\theta_2},\dots,e^{it\theta_2}}_{k_2},\dots,\underbrace{e^{it\theta_m},\dots,e^{it\theta_m}}_{k_m})
\end{equation}
This verifies $D(0)=\mathbbm{1}$ and $D(1)=D$. We generalize the expression \eqref{eq:apdp} to the following parametrized version
\begin{equation}
A(t):=PD(t)P^\dagger
\end{equation}
which satisfies $A(0)=\mathbbm{1}$ and $A(1)=A$. Moreover, by construction, $A(t)$ verifies:


\begin{equation}
    A(t)P^{(i)}_j = e^{it\theta_i}P^{(i)}_j \quad i\in \{1,\ldots, m\}, j\in\{1,\ldots, k_i\}
\end{equation}

\noindent since when multiplying by $A=PDP^\dagger$, the matrix $P^\dagger$ transforms the vector $P^{(i)}_j$ appropriately to be multiplied by $D$. The only difference now is that we have $A(t)=PD(t)P^\dagger$, which only changes the value with which we multiply $P^{(i)}_j$, i.e. it is just a change of phase of the eigenvector.

Using this we want to show the invariance of $A(t)$ under the symmetry operation $S$, i.e. $SA(t)S^\dagger = A(t)$. First we note that for $i\in\{1,\ldots,m\}, j\in\{1,\ldots,k_i\}$, since $SP^{(i)}_j$ is an eigenvector of $A$ (see eq. \ref{eq:SP_is_eigenvector}), one can write

\begin{equation}
    SP^{(i)}_j = \sum_{l=1}^{k_i}c^{(i)}_lP^{(i)}_l\quad c^{(i)}_l\in \mathbb{R},\,\forall l\in\{1,\ldots, k_i\}
\end{equation}
and thus, for $i\in \{1,\ldots, m\}, j\in\{1,\ldots, k_i\}$


\begin{equation}
A(t)SP^{(i)}_j=\sum_{l=1}^{k_i}c^{(i)}_lA(t)P^{(i)}_l =\sum_{l=1}^{k_i}c^{(i)}_le^{it\theta_i}P^{(i)}_l = e^{it\theta_i}(SP^{(i)}_j)
\end{equation}


\noindent Therefore for $i\in\{1,\ldots,m\}$, the elements $\{SP^{(i)}_j\}_{j=1}^{k_i}$ are a group of eigenvectors of $A(t)$ corresponding to the eigenvalue $e^{it\theta_i}$ and thus $A(t)$ can be diagonalized not only by $P$, but also $SP$. Thus

\begin{equation}
SA(t)S^\dagger=S(PD(t)P^{\dagger})S^{\dagger}=(SP)D(t)(SP)^{\dagger}=A(t),\, \forall S\in H
\end{equation}
and one obtains that $A(t)\in C_{U(2^n)}(H)$ for all $t\in [0,1]$ and $A(t)$ is a path in $C_{U(2^n)}(H)$ connecting $\mathbbm{1}$ and $A$, which means that the centralizer $C_{U(2^n)}(H)$ is connected.
\end{proof}

The interpretation is that all symmetry invariant matrices $A$ are connected to each other via the identity. At minimum this leads to a star-like subspace structure, but since any combination of $A_i$ and $A_j$ is also part of the subspace, it is a continuous large region.

\begin{definition}
Let $H$ be a subgroup of $U(2^n)$. The (pseudo) centralizer of $H$ in $SU(2^n)$ is defined by
$$C_{SU(2^n)}(H):=\{A\in SU(2^n)|MA=AM, \forall M\in H\}.$$
\end{definition}
\begin{remark}
Usually $H$ is not a subgroup of $SU(2^n)$, but since $H$ and $SU(2^n)$ are both subgroups of $U(2^n)$ and preserve the operation on $U(2^n)$, $C_{SU(2^n)}(H)$ is well defined and it is easy to show that $C_{SU(2^n)}(H)$ is also a group.
\end{remark}

\begin{theorem}
$C_{SU(2^n)}(H)$ is connected.
\end{theorem}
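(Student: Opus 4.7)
The plan is to lift the construction from the preceding theorem and add a scalar-phase correction so that the whole path remains inside $SU(2^n)$.

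Given $A \in C_{SU(2^n)}(H)$, the preceding theorem produces a continuous path $A(t) = PD(t)P^\dagger$ in $C_{U(2^n)}(H)$ connecting $A(0) = \mathbbm{1}$ to $A(1) = A$, where $D(t) = \operatorname{diag}(e^{it\theta_1},\ldots,e^{it\theta_{2^n}})$ respects the grouping of the eigenvalues $e^{i\theta_i}$ of $A$ by eigenspace. A direct calculation gives $\det A(t) = e^{it\sigma}$ with $\sigma := \sum_i k_i \theta_i$, so $A(t)$ generically fails to lie in $SU(2^n)$ for intermediate $t$, and one needs to compensate.

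My next step is to pass to $B(t) := e^{-it\sigma/2^n} A(t)$. The scalar $e^{-it\sigma/2^n}\mathbbm{1}$ commutes with every $S \in H$, so each $B(t)$ still lies in $C_{U(2^n)}(H)$; and $\det B(t) = e^{-it\sigma}\det A(t) = 1$, so in fact $B(t) \in C_{SU(2^n)}(H)$ for every $t \in [0,1]$, with $B(0) = \mathbbm{1}$ and $B(1) = e^{-i\sigma/2^n}A$. To close the argument I would choose the lifts $\theta_i$ of the eigenvalue arguments (defined only modulo $2\pi$) so that $\sigma = 0$; this is at least numerically consistent because $\det A = 1$ forces $\sigma \in 2\pi\mathbb{Z}$. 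With such a choice $B(t) = A(t)$ itself is a continuous path from $\mathbbm{1}$ to $A$ inside $C_{SU(2^n)}(H)$, and connectedness follows.

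The main obstacle is precisely this last lifting step. The values $\theta_i$ must be constant on each eigenspace of $A$ (otherwise the eigenspace-preservation argument from the preceding theorem collapses and the path leaves the centralizer), so the admissible shifts are $\theta_i \mapsto \theta_i - 2\pi m_i$ with $m_i \in \mathbb{Z}$, which move $\sigma$ only through $2\pi \gcd(k_1,\ldots,k_r)\mathbb{Z}$. When $\gcd(k_i)$ divides $\sigma/2\pi$ the construction goes through directly; in the remaining cases one has to concatenate $B(t)$ with a path through the central subgroup $\{\zeta\mathbbm{1}\colon \zeta^{2^n}=1\} \cap C_{SU(2^n)}(H)$ to absorb the residual $2^n$-th root of unity, and it is the verification that this auxiliary path is always available that I expect to be the technically delicate part of the proof.
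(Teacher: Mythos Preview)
Your route differs from the paper's. The paper does not revisit the explicit path $A(t)$: it observes that $\det\colon C_{U(2^n)}(H)\to U(1)$ is a surjective homomorphism (surjectivity coming from the scalars $e^{i\theta/2^n}\mathbbm{1}$) with kernel $C_{SU(2^n)}(H)$, so that $C_{U(2^n)}(H)/C_{SU(2^n)}(H)\simeq U(1)$, and then asserts that connectedness of the total group forces connectedness of the kernel. Your argument is more concrete, recycling the eigenspace-by-eigenspace path from the previous theorem and correcting the determinant with a global phase.

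The obstacle you isolate is not a technicality to be tidied up later; it is a genuine obstruction, and in fact the statement as written is false. Take any $H\le U(2^n)$ acting irreducibly on $\mathbb{C}^{2^n}$, for instance $H=U(2^n)$ itself. By Schur's lemma $C_{U(2^n)}(H)$ consists only of the unit scalars, so the preceding theorem survives; but $C_{SU(2^n)}(H)=\{\zeta\mathbbm{1}:\zeta^{2^n}=1\}$ is finite and disconnected. In your bookkeeping this is precisely the single-eigenvalue case $k_1=2^n$: the admissible shifts move $\sigma$ only through $2\pi\cdot 2^n\,\mathbb{Z}$ while $\sigma/2\pi$ can be any integer, and the hoped-for auxiliary path through the central subgroup cannot exist because that subgroup is all of $C_{SU(2^n)}(H)$ and is already discrete. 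The paper's final implication has the same defect---a connected group can have $U(1)$ as a quotient by a disconnected normal subgroup, as $\mathbb{R}/\mathbb{Z}\simeq U(1)$ already shows---so your caution is well placed. A correct version needs an extra hypothesis on $H$; once one is in a regime where every central $\zeta\mathbbm{1}$ lies in the identity component of $C_{SU(2^n)}(H)$ (as happens for the permutation symmetries the paper actually cares about), your phase-corrected path followed by a short arc inside that component completes the argument.
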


\begin{proof}
The determinant
$\det:C_{U(2^n)}(H)\to U(1)$ is a group homomorphism and
\begin{equation}
    \ker(\det)=C_{SU(2^n)}(H)
\end{equation}
For any $e^{i\theta}\in U(1)$, we have $e^{i\frac{\theta}{2^n}}\mathbbm{1}\in C_{U(2^n)}(H)$. Hence
\begin{equation}
    \mbox{im}(\det)=U(1)
\end{equation}
By groups isomorphism theorem we have
\begin{equation}
    C_{U(2^n)}(H)/C_{SU(2^n)}(H)\simeq U(1)
\end{equation}
$C_{U(2^n)}(H)$ is connected so $C_{SU(2^n)}(H)$ is also connected.
\end{proof}

From the connectedness of the centralizer it follows that the symmetry-restricted subgroup is connected. It itself acts as a centralizer with respect to the symmetry.


\section{Example – permutation invariance}\label{sec:permutation-invariant}

The strongest discrete symmetry is the permutation invariance, where any input is interchangeable with any other input. Any other discrete symmetry can be realized as a subset of permutaitons. In the quantum computing setting, it can be realized by requiring all qubits to be interchangeable. An exchange of individual qubits $i$ and $j$ can be realized by the $\SWAP_{i,j}$ operation. The symmetry action on the system is then defined by the power set of all possible $\SWAP$ operations, $S=\{\SWAP_{i,j}\}$.\cite{mansky_permutation-invariant_2023} We will refer to this space as $\text{pi}SU(2^n)$, for \emph{p}ermutation \emph{i}nvariant special unitary group under all $\SWAP$s. 

\begin{equation}
    \text{pi}SU(2^n) = \left\{U| SUS = U, \quad\forall S \in \{\SWAP_{i,j}\}_{i,j=1}^n\right\}
\end{equation}

The space of permutation invariant quantum circuits can be created constructively via the corresponding Lie algebra $\text{pi}\mathfrak{su}(2^n)$. Their elements follow the same restriction as \eqref{eq:base-definition}. Via the properties of the $\SWAP$, one can also express the elements via the Pauli group, 
\begin{align}
    \text{pi}\mathfrak{su}(2^n) &= \{x | SxS = x, \quad\forall S \in \{\SWAP_{i,j}\}\} \nonumber\\
    &=\left\{ \bigotimes_{i=1}^n | \pi_i \sigma_1 \otimes \sigma_2 \otimes \ldots \otimes \sigma_n = \pi_j \sigma_1 \otimes \sigma_2 \otimes \ldots \otimes \sigma_n \forall \pi_i, \pi_j\right\}
\end{align}
where $\pi$ denotes a permutation of the Pauli string and $\sigma$ is one of $\{\sigma_x, \sigma_y, \sigma_z, \mathbbm{1}\}$. A permutation $\pi$ of the Pauli string corresponds naturally to symmetry operations $S$ in the $\{\SWAP_{i,j}\}$ group \cite{nielsen_quantum_2010}. Using the Pauli string formalism, it is possible to create all elements of $\text{pi}\mathfrak{su}$ by realizing that a sum over all permutations of a particular Pauli string is permutation invariant,
\begin{equation}
    \text{pi}\mathfrak{su}(2^n) = \operatorname{span}\left\{\sum_k \pi_k \bigotimes_{i=1}^n\sigma_i\right\} \backslash \mathbbm{1}^n
\end{equation}

This equation creates sums of strings with all possible permutations of the indices. Since a $\SWAP$ exchanges the same two indices $i, j$ on all elements of the sum, the overall effect is just a reordering of elements in the sum. This allows us to create a commuting diagram for $\text{pi}SU$, shown in figure \ref{fig:pisu-diagram}.

\begin{figure}
    \centering
    \begin{tikzpicture}[xscale = 4, yscale=-2]
        \draw (0,0) node (su) {$\mathfrak{su}$}
            (0, 1) node (SU) {$SU$}
            (1, 0) node (pisu) {$\text{pi}\mathfrak{su}$}
            (1, 1) node (piSU) {$\text{pi}SU$};
        \draw[->] (su) -- (SU) node[midway, left] {$\exp$};
        \draw[->] (su) -- (pisu) node[midway, above] {$\SWAP$ invariance} node[midway, below] {symmetrization};
        \draw[->] (pisu) -- (piSU) node[midway, right] {$\exp$};
        \draw[->] (SU) -- (piSU) node[midway, below] {$\SWAP$ invariance};
    \end{tikzpicture}
    \caption{The commuting diagram corresponding to $\text{pi}SU$. The restriction on the algebra creates $\text{pi}\mathfrak{su}$, which generates $\text{pi}SU$ via the exponential function, and all elements of the group $SU$ can be restricted to be $\SWAP$ invariant to create $\text{pi}SU$.}
    \label{fig:pisu-diagram}
\end{figure}
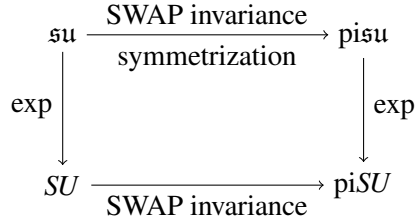

Because the elements of the group can also be epressed as generators in the algebra, it possible to find circuits directly for the elements. This can be done by Pauli string exponentiation \cite{mansky_decomposition_2023}. The resulting circuits have the structure of a $\operatorname{CNOT}$ ladder with rotations around them to shift them to the relevant axis. The Pauli strings can be exponentiated individually to build circuits and then concatenated, as long as they only contain two of the three Pauli matrices \cite{mansky_permutation-invariant_2023}. 

\begin{equation}
    \exp\left(-\frac{i\alpha}2\sum_k \pi_k \bigotimes_{i=1}^n \sigma_i\right) = \prod_{k} \exp\left(-\frac{i\alpha}2\pi_k \bigotimes_{i=1}^n \sigma_i\right)\quad\text{ if } \sigma= \{\sigma_x, \sigma_y, \mathbbm{1}\}, \{\sigma_x, \sigma_z, \mathbbm{1}\} \text{ or } \{\sigma_y, \sigma_z, \mathbbm{1}\}
\end{equation}

Therefore the creation of symmetry-respecting circuits, at least those constructed via a $\SWAP$-based symmetry representation, is straightforward. Each Pauli string exponential has a direct representation on the quantum circuit as a parametrized element with parameter $\alpha$ \cite{mansky_decomposition_2023}.

\section{Discussion}

It is often assumed implicitly that the subspaces spanned by symmetry restrictions have all the necessary properties for quantum computing. We have shown that this assumption is warranted and the subspace inherits the properties of the superspace $SU(2^n)$. With the constructive method of taking symmetry restrictions as permutation matrices, we have shown that the symmetry-restricted subspace is always closed and connected. Based on our results, it allows the use of symmetric subspaces at ease within quantum computing applications.

We also expand prior work on symmetric states to the operator formalism. We show that the quantum circuits permit the same symmetry considerations as quantum states. This also holds true for continuous symmetries, whereas the symmetric states rely on a discrete symmetry \cite{harrow_church_2013}. However, it is straightforward to extend the symmetric product for the states to represent continuous symmetries as well.

The proofs presented here hold for subgroups of the special unitary group and unitary group. The results presented here are also not easily transferable to classical machine learning, since the underlying Euclidian vector space for the feature and label space has a very different structure to $SU$ and $U$.

\section*{Acknowledgements}

The authors acknowledge funding by the German Bundesministerium für Bildung und Forschung (BMBF) under grant 13N16089 (BAIQO) as part of the funding program "Förderprogramm Quantentechnologien – von den Grundlagen zum Markt" (funding program quantum technologies – from the basics to the market).

\bibliographystyle{eptcs}
\bibliography{references}
\end{document}